\newcommand{\mC}{\mathcal{C}}
\newcommand{\tmC}{\tilde{\mathcal{C}}}
\newcommand{\mD}{\mathcal{D}}
\newcommand{\mS}{\mathcal{S}}
\newcommand{\mX}{\mathcal{X}}
\newcommand{\mK}{\mathcal{K}}
\newcommand{\R}{\mathbb{R}}
\newcommand{\mbE}{\mathbb{E}}
\newcommand{\mbP}{\mathbb{P}}
\newcommand{\mO}{\mathcal{O}}
\newcommand{\defeq}{:=}
\newcommand{\innerp}[1]{\langle #1 \rangle}
\newcommand{\prob}[1]{\mathbb{P}\left( #1 \right)}
\newcommand{\ballone}[1]{\mathcal{B}^1\left(#1\right)}
\renewcommand{\top}{\mathsf{T}}
\newtheorem{definition}{Definition}[section]
\newtheorem{thm}{Theorem}
\newtheorem{proposition}{Proposition}
\newtheorem{assumption}{Assumption}
\newtheorem{remark}{Remark}[section]
\newtheorem{problem}{Problem}
\title{\LARGE \bf
Safety Verification of Stochastic Systems: A Set-Erosion Approach 
}
\author{Zishun Liu$^{1}$, Saber Jafarpour$^{2}$ and Yongxin Chen$^{1}$
\thanks{$^{1}$Zishun Liu and Yongxin Chen are with Georgia Institute of Technology, Atlanta, GA 30332 
        {\tt\small \{zliu910\}\{yongchen\}@gatech.edu}}%
\thanks{$^{2}$ Saber Jafarpour is with University of Colorado Boulder, Boulder, CO 80309
        {\tt\small Saber.Jafarpour@colorado.edu}}%
}
\begin{document}
\maketitle
\thispagestyle{empty}
\pagestyle{empty}

\begin{abstract}
We study the safety verification problem for discrete-time stochastic systems. 
We propose an approach for safety verification termed set-erosion strategy that verifies the safety of a stochastic system on a safe set through the safety of its associated deterministic system on an eroded subset.
%
The amount of erosion is captured by the probabilistic bound on the distance between stochastic trajectories and their associated deterministic counterpart. 
Building on our recent work \cite{szy2024Auto}, we establish a sharp probabilistic bound on this distance. 
%
Combining this bound with the set-erosion strategy, we establish a general framework for the safety verification of stochastic systems.
%
Our method is flexible and can work effectively with any deterministic safety verification techniques. 
We exemplify our method by incorporating barrier functions designed for deterministic safety verification, obtaining barrier certificates much tighter than existing results. Numerical experiments are conducted to demonstrate the efficacy and superiority of our method.


\end{abstract}


\section{Introduction}
\label{sec:introduction}

Safety is a fundamental requirement for a wide range of real-world systems, including autonomous vehicles, robots, power grids, and beyond. Motivated by the significance of safety, research on safety verification has flourished in recent decades. Typically, safety verification refers to the process of verifying whether the system state remains within a defined safe region over a specified time horizon, whether in discrete-time or continuous-time contexts \cite{li2023survey}. 
In this paper, we focus on the safety verification problem for discrete-time systems.

Since the safety of real-world systems is frequently challenged by uncertainties in the environment \cite{Zhang2016Understanding}, it is essential for safety verification schemes to account for disturbances.
Most existing approaches have modeled disturbances as bounded deterministic inputs and verified the safety in the worst case through deterministic methods such as dynamic programming \cite{AA-MP-JL-SS:08,SS-JL:10}, barrier certification \cite{prajna2007framework} and ISSf-barrier function \cite{8405547}. Among these deterministic methods, barrier certification has attracted growing attention thanks to its simplicity and has been widely adopted
to formally prove the safety of nonlinear and hybrid systems \cite{ames2019control}.

Many real-world applications are subject to stochastic disturbances \cite{chapman2021risk}. In such cases, traditional deterministic methods often become either inapplicable or overly conservative, as they focus on worst-case scenarios that rarely happen~\cite{cosner2024bounding}.
To better reflect the effects of stochastic disturbances, stochastic safety verification shifts the focus to ensuring safety within a safe set with high probability, e.g., a finite-time stochastic trajectory stays in the safe set with probability $>99.9\%$. 

Multiple techniques have been developed for the safety verification of discrete-time stochastic systems.
For instance, martingale-based strategies \cite{cosner2024bounding,nishimura2024control,santoyo2021barrier}  focus on constructing barrier functions that utilize semi-martingale or $c$-martingale conditions \cite{steinhardt2012finite} to bound the failure probability. Another commonly used method is direct risk estimation \cite{frey2020collision,blackmore2009convex,ono2015chance}, which first bounds the failure probability of the system state at a single time instance, then applies a union bound over the entire time horizon. Some other methods such as conformal prediction \cite{vlahakis2024probabilistic} and optimization-based approaches with chance constraints \cite{5970128} are also applied in practice. However, all these techniques are often either overly conservative for ensuring safety with high probability or limited to specific, restrictive scenarios. 

In this work, we present a novel approach termed \textit{set-erosion} strategy for verifying the safety of discrete-time stochastic systems. 
Our strategy states that to verify the safety of a stochastic system on a set, it is sufficient to verify the safety of an associated deterministic system on an eroded subset. 
The degree of erosion is quantified by the probabilistic bound on the distance between stochastic trajectories and their deterministic counterparts, termed stochastic trajectory gap. We provide a sharp probabilistic bound for this gap, enabling the set-erosion strategy to effectively reduce the stochastic safety verification problem to a deterministic one. Unlike martingale-based methods where designing a satisfying martingale is challenging, our method is easy to use and can be combined with any existing techniques for safety verification of deterministic systems. Moreover, our method, when combined with barrier functions, offers a significantly tighter result than existing methods when the failure probability bound is low and the time horizon is long.

\textit{Notations.} The set of positive integers is denoted by $\mathbb{N}_{+}$. We use $\|\cdot\|$ to denote $\ell_2$ induced norm. Given two sets $A,B\subseteq \R^n$, the Minkowski sum of the sets $A$ and $B$ is defined by $A\oplus B = \{x+y: x\in A,~ y\in B\}$, and the Minkowski difference is defined by $A\ominus B=(A^c\oplus(-B))^c$, where $A^c,B^c$ are the complements of $A,B$ and $-B=\{-y: y\in B\}$. We use $\mbE$ to denote expectations, $\mbP$ to denote the probability, $\mathcal{N}(\mu,\Sigma)$ to denote Gaussian distribution, $\mathcal{B}^n(r,y)$ to denote the ball $\{x\in\R^n: \|x-y\|\leq r\}$. For a random variable $X$, $X\sim G$ means $X$ is independent and identically drawn from the distribution $G$. We say $\alpha(\cdot): \R\to\R$ is an extended class $\mK$ function if $\alpha(0)=0$ and $\alpha(\cdot)$ is increasing on $\R$.

\section{Problem Statement} \label{sec: problem}
Consider the discrete-time stochastic system
\begin{equation}\label{sys: d-t ss}
     X_{t+1}=f(X_t,d_t,t)+w_t,
\end{equation}
where $X_t\in \R^n$ is the system state, $d_t\in\mD\subseteq\R^m$ is the bounded inputg, $w_t\in\R^n$ is the stochastic disturbance, and $f: \R^n\times\R^m\times\mathbb{N}_+\to\R^n$ is a smooth parameterized vector field. In this paper, we impose the Lipschitz nonlinearity condition on the system. 
\begin{assumption}\label{ass: Lipschitz f}
    At every time $t\geq0$, there exists $L_t\geq0$ such that $\|f(x,d,t) - f(y,d,t)\| \leq L_t\|x-y\|$ holds for every $x,y\in \R^n$ and every $d\in \R^m$.
\end{assumption}
We model $w_t$ as \textit{sub-Gaussian} disturbance, which includes a wide range of noise distributions such as Gaussian, uniform, and any zero-mean distributions with bounded support \cite[Section 2]{vershynin2018high}. 
\begin{definition}[sub-Gaussian] \label{def: subG}
    A random variable $X\in\R^n$ is said to be sub-Gaussian with variance proxy $\sigma^2$, denoted as $X\sim subG(\sigma^2)$, if $\mbE(X)=0$ and 
    for any $\ell$ on the unit sphere $\mS^{n-1}$,
$\mbE_X\left(e^{\lambda \innerp{\ell,X}}\right)\leq e^{\frac{\lambda^2\sigma^2}{2}},\mbox{for all } \lambda\in\R$.
\end{definition}
\smallskip
\begin{assumption}\label{ass: bounded sigma}
   For the discrete-time stochastic system~\eqref{sys: d-t ss}, $w_t\sim subG(\sigma_t^2)$ with some finite $\sigma_t>0$, $\forall t\geq0$. 
\end{assumption}

This paper aims to establish an effective safety verification method for the stochastic system \eqref{sys: d-t ss}. 
To formulate this problem, we first formalize the concept of safety for the deterministic systems \cite{ames2019control}.
Consider the deterministic system
\begin{equation}\label{sys: d-t ds}
    x_{t+1}=f(x_t,d_t,t),
\end{equation}
which can be treated as the noise-free version of the stochastic system \eqref{sys: d-t ss}. Given a terminal time $T\in\mathbb{N}_+$ and a safe set $\mC\subseteq\R^n$, we say the deterministic system \eqref{sys: d-t ds} starting from $\mX_0$ is \textit{safe} during $t\leq T$ if $\mX_0\subseteq\mC$ and
\begin{equation}\label{eq: det safety}
  x_0\in\mX_0 ~ \Rightarrow~ x_t\in\mC,~  \;\;\;\;\forall t\leq T,~ \forall d_t\in\mD.
\end{equation}

For the stochastic system \eqref{sys: d-t ss}, safety in the sense of \eqref{eq: det safety} can be restrictive. When $w_t$ is unbounded, $X_t$ is likely to be unbounded, thus any bounded set in $\R^n$ will be judged as unsafe. 
Even if $w_t$ is bounded, \eqref{eq: det safety} completely ignores the statistical property of the stochastic noise and requires $\mC$ to have enough robustness to the worst case of $w_t$, which rarely happens in applications. This usually leads to conservative safety guarantees. For these reasons, we focus on the stochastic safety with bounded failure probability \cite{steinhardt2012finite} to better capture the effect of the stochastic noise.
\begin{definition}
    Consider the stochastic system \eqref{sys: d-t ss} with the bounded set $\mD\subseteq\R^m$. Given a $\delta\in[0,1]$, an safe set $\mC\subset\mathbb{R}^n$, an initial configuration  $\mathcal{X}_0\subseteq\R^n$ and a terminal time $T$, the system is said to be \textit{safe with $1-\delta$ guarantee}
    during $t\leq T$ if $\mX_0\subseteq\mC$ and:
    \begin{equation} \label{eq: sto safety}
        X_0\in\mathcal{X}_0~ \Rightarrow~ \prob{X_t\in\mC,~ \forall t\leq T}\geq 1-\delta.
    \end{equation}
\end{definition}
\smallskip
With this definition, the stochastic safety verification problem that we seek to solve can be formalized as below. 
\begin{problem}[Stochastic Safety Verification]\label{prob1}
    Consider a stochastic system \eqref{sys: d-t ss} under Assumptions \ref{ass: Lipschitz f} and \ref{ass: bounded sigma}. Develop an effective strategy to verify its safety with $1-\delta$ guarantee during a finite horizon $t\leq T$. 
\end{problem}

\begin{figure}
\centering
\includegraphics[width =0.7\linewidth]{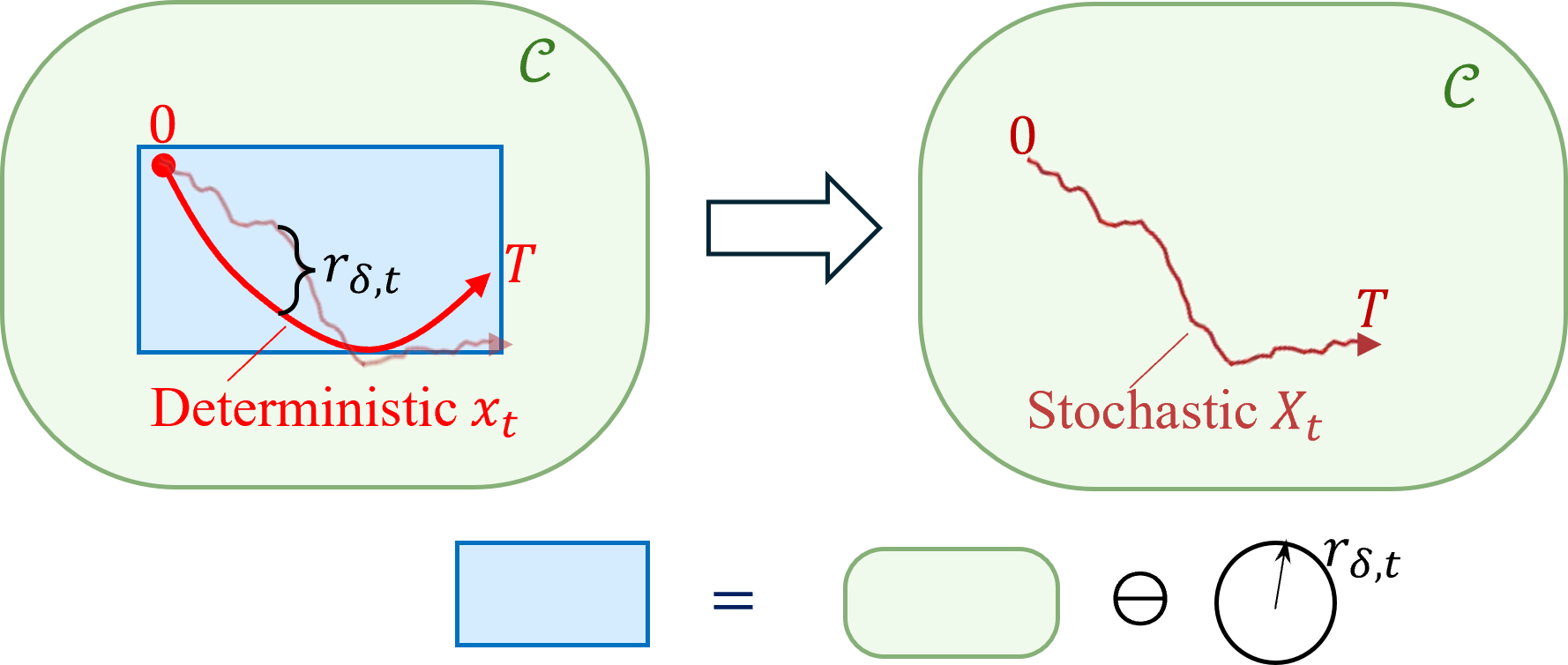}
\caption{An illustration of set-erosion strategy. Here $\mC$ in green is the safe set, and the blue area is the eroded subset $\mC\ominus\mathcal{B}^n\left(r_{\delta,t},0\right)$ with $r_{\delta,t}$ given in Theorem \ref{thm: set-erosion}-2). By Theorem \ref{thm: set-erosion}, if the deterministic trajectory stays in the blue area at any time, then the stochastic trajectory is safe on $\mC$ with $1-\delta$ guarantee. 
}
\label{fig: set-erosion}
\end{figure}

\section{Set-Erosion Strategy} \label{sec: erosion}
 Intuitively, the stochastic system \eqref{sys: d-t ss} is fluctuating around its associated deterministic system \eqref{sys: d-t ds} with high probability. 
 Given an safe set $\mC\subset\R^n$, if we erode/shrink $\mC$ from its boundary slightly to get a subset $\tmC\subset\mC$, and verify that the deterministic system \eqref{sys: d-t ds} is safe on $\tmC$, then the fluctuation of the stochastic trajectories would probably not exceed the ``robustness buffer'' $\mC\backslash\tmC$. Building on this intuition, we propose a strategy termed \textit{set-erosion} for stochastic safety verification. 
 This strategy can be viewed as the dual to the separation strategy for stochastic reachability analysis proposed in \cite{szy2024TAC}. 

For the associated systems \eqref{sys: d-t ss} and \eqref{sys: d-t ds}, we say $X_t$ and $x_t$ are \textit{associated} trajectories if they have the same initial state $X_0=x_0$ and the same input $d_t$. The fluctuation of the stochastic system \eqref{sys: d-t ss} around the deterministic system \eqref{sys: d-t ds} can be quantified by the distances among pairs of associated trajectories. The set-erosion strategy produces a sufficient condition on the safety of the stochastic system \eqref{sys: d-t ss} with $1-\delta$ guarantee, as formalized below.
\begin{thm}[Set-erosion strategy]\label{thm: set-erosion}
     Consider the stochastic system \eqref{sys: d-t ss} and its associated deterministic system \eqref{sys: d-t ds}. Given an initial set $\mathcal{X}_0$, a safe set $\mC\in\R^n$ and a terminal time $T$, suppose that there exists $r_{\delta,t}$ such that, for any trajectory $X_t$ of \eqref{sys: d-t ss} and its associated trajectory $x_t$ of \eqref{sys: d-t ds} starting from $\mathcal{X}_0$, 
     \begin{enumerate}
         \item \label{thm1-1} $\prob{\|X_t-x_t\|\leq r_{\delta,t},~\forall t\leq T}\geq 1-\delta$,
         \item \label{thm1-2} $ x_t\in\mC\ominus\mathcal{B}^n\left(r_{\delta,t},0\right)$, $\mbox{ for all } t\leq T$,
     \end{enumerate}
     then the system \eqref{sys: d-t ss} is safe with $1-\delta$ guarantee during $t\leq T$.
\end{thm}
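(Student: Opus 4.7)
The plan is to reduce the stochastic safety statement to a purely set-theoretic containment, driven by the Minkowski difference identity for balls. The main work is to translate condition \ref{thm1-2} into the geometric statement that the ball of radius $r_{\delta,t}$ centered at $x_t$ lies inside $\mC$, and then couple this with condition \ref{thm1-1} pointwise in the sample space.

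First I would unpack the Minkowski difference. Since $\mathcal{B}^n(r_{\delta,t},0)$ is symmetric about the origin, $-\mathcal{B}^n(r_{\delta,t},0)=\mathcal{B}^n(r_{\delta,t},0)$, and the definition $\mC\ominus\mathcal{B}^n(r_{\delta,t},0)=\bigl(\mC^c\oplus\mathcal{B}^n(r_{\delta,t},0)\bigr)^c$ directly yields the characterization
\begin{equation*}
 y\in\mC\ominus\mathcal{B}^n(r_{\delta,t},0) \quad\Longleftrightarrow\quad \mathcal{B}^n(r_{\delta,t},y)\subseteq \mC.
\end{equation*}
Applying this to $y=x_t$ under condition \ref{thm1-2} shows that for every $t\leq T$ the entire $r_{\delta,t}$-ball around the deterministic trajectory sits inside the safe set.

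Next I would introduce the event $E=\{\|X_t-x_t\|\leq r_{\delta,t},\ \forall t\leq T\}$ from condition \ref{thm1-1}, which satisfies $\mbP(E)\geq 1-\delta$. On $E$, for each $t\leq T$ we have $X_t\in\mathcal{B}^n(r_{\delta,t},x_t)$, and by the previous step this ball lies in $\mC$, hence $X_t\in\mC$. Consequently $E\subseteq\{X_t\in\mC,\ \forall t\leq T\}$, and monotonicity of probability gives $\prob{X_t\in\mC,\ \forall t\leq T}\geq\mbP(E)\geq 1-\delta$, which is exactly the definition of safety with $1-\delta$ guarantee. The containment $\mX_0\subseteq\mC$ required in the definition is inherited from condition \ref{thm1-2} at $t=0$ combined with $X_0=x_0\in\mX_0$.

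There is no real analytic obstacle here; the proof is essentially a one-line combination once the Minkowski difference is rewritten as a ball-inclusion condition. The only subtlety I would flag is the handling of $-B$ in the Minkowski difference definition, which is benign because the erosion uses a ball centered at the origin, and the need to verify that the high-probability event from \ref{thm1-1} is a joint-in-time event rather than a pointwise one, so that the quantifier $\forall t\leq T$ inside the probability in \eqref{eq: sto safety} is preserved without a union bound.
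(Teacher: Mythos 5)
Your proof is correct and follows essentially the same route as the paper's: condition \ref{thm1-1} places $X_t$ in the $r_{\delta,t}$-ball around $x_t$ jointly over $t\leq T$ with probability at least $1-\delta$, and condition \ref{thm1-2} together with the Minkowski-difference characterization guarantees that this ball lies inside $\mC$. The only difference is that you spell out the equivalence $y\in\mC\ominus\mathcal{B}^n(r,0)\Leftrightarrow\mathcal{B}^n(r,y)\subseteq\mC$ explicitly, which the paper leaves implicit.
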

\begin{proof}
    Let $X_t$ be any trajectory of \eqref{sys: d-t ss} associated with a trajectory $x_t$ of \eqref{sys: d-t ds}. Then, by Condition \ref{thm1-1} and the definition of the Minkowski sum,
\begin{align}\nonumber
   \prob{ X_t\in \{x_t\}\oplus\mathcal{B}^n(r_{\delta,t},0), \forall t\leq T}\geq 1-\delta.
\end{align}
By Condition \ref{thm1-2}, $\prob{ X_t\in \mC, \forall t\leq T}\geq 1-\delta$ follows. 
\end{proof}

An illustration of Theorem \ref{thm: set-erosion} is shown in Figure \ref{fig: set-erosion}. We term the $r_{\delta,t}$ in Theorem \ref{thm: set-erosion} as the probabilistic bound on \textit{stochastic trajectory gap}, as it represents the gap between stochastic trajectories and their deterministic counterpart over a time horizon. It quantifies the \textit{erosion depth} in the Minkowski difference $\mC\ominus\mathcal{B}^n\left(r_{\delta,t},0\right)$. Theorem \ref{thm: set-erosion} states that once a probabilistic bound $r_{\delta,t}$ of the stochastic trajectory gap is provided, then to verify the safety of the stochastic system on $\mC$,
 it suffices to verify the safety of its \textit{associated deterministic system} on the eroded subset $\mC\ominus\mathcal{B}^n\left(r_{\delta,t},0\right)$.  
 
The effectiveness of the set-erosion strategy relies on the tightness of $r_{\delta,t}$. If $r_{\delta,t}$ is too large, then $\mC\ominus\mathcal{B}^n\left(r_{\delta,t},0\right)$ can be very small or even empty, rendering conservative conditions. Therefore, it is crucial to establish a tight probabilistic bound $r_{\delta,t}$ for the stochastic trajectory gap.

\section{Probabilistic Bound on Stochastic Trajectory Gap} \label{sec: bound}
In this section, we present two approaches to probabilistically bounding the stochastic trajectory gap. 
The first one is based on a novel stochastic analysis technique developed in our previous work \cite{szy2024Auto}. The second one follows the idea of worst-case analysis and is presented for comparsion.
By comparing the two, we demonstrate that the former is always superior to the latter.

\subsection{Probabilistic Bound Based on Stochastic Deviation}

In~\cite{szy2024Auto}, we introduce the notion of \emph{stochastic deviation} as the distance $\|X_t-x_t\|$ between associated $X_t$ and $x_t$ at a single time $t$, and give a tight probabilistic bound on the stochastic deviation. 
\begin{proposition}[Stochastic deviation~\cite{szy2024Auto}]\label{prop: single-time}
     Consider the stochastic system \eqref{sys: d-t ss} and its associated deterministic system \eqref{sys: d-t ds} under Assumptions \ref{ass: Lipschitz f} and \ref{ass: bounded sigma}. Let $X_t$ be the trajectory of \eqref{sys: d-t ss} and $x_t$ be the associated trajectory of \eqref{sys: d-t ds} Then, given $t\geq0$, for any $\delta\in(0,1)$ and $\varepsilon\in(0,1)$,
     \begin{equation}\label{eq: prop1}
         \|X_t-x_t\|\leq\sqrt{\Psi_t(\varepsilon_1n+\varepsilon_2\log(1/\delta))}
     \end{equation}
     holds with probability at least $1-\delta$, where 
        \begin{equation}\label{eq: Psi val}
         \Psi_t=\psi_{t-1}\sum_{k=0}^{t-1}\sigma_{k}^2\psi_k^{-1},\quad 
             \psi_t=\prod_{k=0}^{t}L_k^{2},
     \end{equation}
      \begin{equation}\label{eq: epsilon val}
        \varepsilon_1=\frac{2\log(1+2/\varepsilon)}{(1-\varepsilon)^2},~ \varepsilon_2=\frac{2}{(1-\varepsilon)^2}.
    \end{equation}
\end{proposition}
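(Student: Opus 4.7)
The plan is to analyze the error process $e_t := X_t - x_t$. Subtracting \eqref{sys: d-t ds} from \eqref{sys: d-t ss} yields the recursion $e_{t+1} = g_t + w_t$ where $g_t := f(X_t, d_t, t) - f(x_t, d_t, t)$, and Assumption~\ref{ass: Lipschitz f} enforces $\|g_t\| \leq L_t \|e_t\|$. This presents $e_t$ as the response of a (random) linear time-varying system to sub-Gaussian forcing $\{w_k\}$ with per-step gain bounded by $L_k$. A quick second-moment induction, using that $w_t$ is independent of and mean-zero given the history $w_0,\ldots,w_{t-1}$, confirms that the natural accumulated variance of $e_t$ is indeed $\Psi_t$ from \eqref{eq: Psi val}, so the order of the bound \eqref{eq: prop1} is at least morally correct.

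To upgrade this moment estimate to a high-probability bound, I would proceed in two stages. First, fix a direction $\ell\in\mS^{n-1}$ and establish via an iterated conditional moment generating function argument that $\innerp{\ell,e_t}$ admits sub-Gaussian concentration with variance proxy of order $\Psi_t$; the two ingredients are Definition~\ref{def: subG} applied conditionally on the past and the Lipschitz envelope on $g_t$. Second, to promote this scalar bound to the Euclidean norm $\|e_t\| = \sup_{\ell\in\mS^{n-1}}\innerp{\ell,e_t}$, I would cover $\mS^{n-1}$ with an $\varepsilon$-net of cardinality at most $(1+2/\varepsilon)^n$, apply a union bound over the net at confidence $\delta$, and pass from the net-maximum to the true supremum via the standard inflation factor $1/(1-\varepsilon)$. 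Squaring these ingredients produces exactly the constants $\varepsilon_1 = 2\log(1+2/\varepsilon)/(1-\varepsilon)^2$ and $\varepsilon_2 = 2/(1-\varepsilon)^2$ in \eqref{eq: epsilon val}, with the dimensional factor $\varepsilon_1 n$ tracing back to the log of the net cardinality and the $\log(1/\delta)$ term coming from the union bound.

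The main obstacle I anticipate is the nonlinear coupling in the induction step: because $g_t$ depends on $X_t$ and hence on past noise, $\innerp{\ell,e_t}$ is not a martingale, and a careless MGF recursion can accumulate slack that inflates the effective variance proxy well beyond $\Psi_t$. The key trick, as developed in \cite{szy2024Auto} and which I would follow, is to condition on the past, bound $|\innerp{\ell, g_t}|$ deterministically by $L_t\|e_t\|$ via Cauchy--Schwarz, and then exploit the conditional independence of the sub-Gaussian increment $w_t$ to propagate a clean multiplicative contraction $L_t$ into the next step of the induction rather than re-paying the supremum over $\ell$ at every iteration. Once this decoupling is in place the two stages above combine to give \eqref{eq: prop1} directly; and since the result is precisely the one proved in \cite{szy2024Auto}, it can in practice simply be invoked from there.
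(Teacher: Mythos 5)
The paper does not prove this proposition at all: it is imported verbatim from \cite{szy2024Auto}, so your closing remark that the result ``can in practice simply be invoked from there'' is exactly what the paper does, and that alone discharges the statement. Your sketch of the underlying argument is also a faithful outline of the standard route to such bounds, and you have reverse-engineered the constants correctly: $\varepsilon_1 n$ is the logarithm of the $(1+2/\varepsilon)^n$ net cardinality, $\varepsilon_2\log(1/\delta)$ is the confidence term, and the $1/(1-\varepsilon)^2$ factor is the squared inflation from net maximum to true supremum; the variance $\Psi_t=\sum_{k=0}^{t-1}\sigma_k^2\prod_{j=k+1}^{t-1}L_j^2$ likewise matches the second-moment induction.

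If your sketch were meant to stand on its own, the soft spot is the one you already half-identify: in the recursion $e_{t+1}=g_t+w_t$ the Lipschitz bound controls $\|g_t\|\leq L_t\|e_t\|$, i.e.\ the full norm, not the projection $|\innerp{\ell,g_t}|$ in terms of $|\innerp{\ell,e_t}|$ (Cauchy--Schwarz gives $|\innerp{\ell,g_t}|\leq L_t\|e_t\|$, which reintroduces the norm). So the fixed-direction MGF recursion does not close on a single $\ell$, and one cannot simply ``propagate a clean multiplicative contraction $L_t$'' directionally; the cited work has to run the concentration argument on the norm process $\|e_t\|$ itself (or an equivalent device), which is precisely the nontrivial content of \cite{szy2024Auto}. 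Since the paper delegates that content to the reference, your proposal is as complete as the paper's own treatment, but be aware that the two-stage plan as written would not compile into a self-contained proof without that additional machinery.
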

\begin{remark}\label{remark: parameter}
    In general, the choice of $\varepsilon_1$,$\varepsilon_2$ based on~\eqref{eq: epsilon val} is not necessarily optimal. For instance, when the state dimension $n=1$, one can choose $\varepsilon_1=2\log2$ and $\varepsilon_2=2$ by Hoeffding's Inequality \cite[Chapter 1]{rigollet2023high} for a tighter bound. 
\end{remark}

Based on Proposition \ref{prop: single-time}, we establish a probabilistic bound on the stochastic trajectory gap over a finite horizon. 

\begin{thm}[Stochastic trajectory gap]\label{thm: union bound}
    Consider the stochastic system \eqref{sys: d-t ss} and its associated deterministic system \eqref{sys: d-t ds} under Assumption \ref{ass: Lipschitz f} and \ref{ass: bounded sigma}. Let $X_t$ be the trajectory of \eqref{sys: d-t ss} and $x_t$ be the associated trajectory of \eqref{sys: d-t ds} with the same initial state $x_0\in\mX_0$ and input $d_t\in\mD$ on $t\leq T$. For any given $\delta\in(0,1]$ and desired $\varepsilon\in(0,1)$, define
    \begin{equation}\label{eq: r=}
        r_{\delta,t}=
      \sqrt{\Psi_t(\varepsilon_1n+\varepsilon_2\log(\tfrac{T}{\delta})},
    \end{equation}
   where $\Psi_t$ is as in \eqref{eq: Psi val} and $\varepsilon_1,\varepsilon_2$ are as in \eqref{eq: epsilon val}. Then 
   \begin{equation}\label{eq: thm1}
\mathbb{P}\left(\|X_t-x_t\|\leq r_{\delta,t}, ~\forall t\leq T\right)\geq 1-\delta.
   \end{equation}
\end{thm}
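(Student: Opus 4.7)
The plan is to derive the uniform-in-time bound from the single-time bound in Proposition \ref{prop: single-time} via a straightforward union bound over the time horizon. The key observation is that the tail parameter $\delta$ in Proposition \ref{prop: single-time} appears only inside a logarithm, so inflating it by a factor of $T$ costs only an additive $\log T$ term inside the square root — exactly the gap between $\sqrt{\Psi_t(\varepsilon_1 n + \varepsilon_2 \log(1/\delta))}$ and $r_{\delta,t} = \sqrt{\Psi_t(\varepsilon_1 n + \varepsilon_2 \log(T/\delta))}$.

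Concretely, I would proceed as follows. First, note that $X_0 = x_0$ by the common-initialization hypothesis, so $\|X_0 - x_0\| = 0 \leq r_{\delta,0}$ holds with probability one and contributes nothing to the failure probability. Next, for each fixed $t \in \{1, 2, \ldots, T\}$, apply Proposition \ref{prop: single-time} with confidence parameter $\delta/T$ in place of $\delta$ (the choice of $\varepsilon \in (0,1)$ being the same as in the theorem statement). This yields, for each such $t$,
\begin{equation*}
\mathbb{P}\bigl(\|X_t - x_t\| > r_{\delta,t}\bigr) \leq \frac{\delta}{T},
\end{equation*}
since substituting $\delta \mapsto \delta/T$ into $\log(1/\delta)$ produces the $\log(T/\delta)$ that appears in \eqref{eq: r=}.

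Finally, I would apply the union bound over the $T$ failure events, one for each $t \in \{1,\ldots,T\}$:
\begin{equation*}
\mathbb{P}\bigl(\exists\, t \leq T :\, \|X_t - x_t\| > r_{\delta,t}\bigr) \leq \sum_{t=1}^{T} \frac{\delta}{T} = \delta,
\end{equation*}
and then take the complement to obtain \eqref{eq: thm1}.

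There is no real obstacle here — Proposition \ref{prop: single-time} already does the heavy lifting of establishing a sharp single-time sub-Gaussian-type bound, and the passage to a trajectory-wide bound is a one-line union bound. The only points requiring mild care are (i) handling $t = 0$ separately so that $\Psi_0$ (an empty sum) does not need to be interpreted, and (ii) confirming that Proposition \ref{prop: single-time} can be invoked with the same $\varepsilon$ at every $t$, which is immediate since $\varepsilon$ is a free parameter independent of $t$. A worthwhile remark after the proof is that the crude $\log T$ inflation is benign: because the stochastic trajectory gap enters $r_{\delta,t}$ only through a square root, the uniform-in-$t$ bound is larger than the single-time bound by only a factor of order $\sqrt{1 + \log T}$, so the set-erosion depth grows very slowly in the horizon $T$.
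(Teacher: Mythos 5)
Your proposal is correct and follows essentially the same route as the paper's own proof: invoke Proposition \ref{prop: single-time} with confidence parameter $\delta/T$ at each time $t$, then apply the union bound over $t=1,\dots,T$ and take complements. The only (harmless) additions are your explicit handling of $t=0$ and the closing remark on the $\sqrt{\log T}$ inflation, which the paper makes separately after the theorem.
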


\begin{proof}
     Given $t\in[0,T]$, Proposition \ref{prop: single-time} implies that for any associated $X_t$ and $x_t$ at time $t$, it holds that 
    \begin{equation}\label{eq: one-point}
        \mathbb{P}\left(\|X_t-x_t\|> \sqrt{\Psi_t(\varepsilon_1n+\varepsilon_2\log(\tfrac{T}{\delta})}\right)\leq \tfrac{\delta}{T},
    \end{equation}
    where $\Psi_t$ is as in \eqref{eq: Psi val} and $\varepsilon_1,\varepsilon_2$ are as in \eqref{eq: epsilon val}. Define $r_{\delta,t}=
      \sqrt{\Psi_t(\varepsilon_1n+\varepsilon_2\log(\tfrac{T}{\delta})}$. Applying union bound inequality to \eqref{eq: one-point} over $t=1,\dots,T$ yields
    \begin{equation*}\label{eq: union bound ineq}
        \begin{split}
            \mathbb{P}(\bigcap_{t=1}^T\|X_t-x_t\|\leq r_{\delta,t})
            =&1-\mathbb{P}(\bigcup_{t=1}^T\|X_t-x_t\|> r_{\delta,t}) \\
            \geq &1-\sum_{t=1}^T \frac{\delta}{T}=1-\delta,
        \end{split}
    \end{equation*}
    which completes the proof.
\end{proof}

For the special case when $L_t\equiv L$ and $\sigma_t\equiv\sigma$ with some $L,\sigma>0$ for every $t\leq T$, the expression for $r_{\delta,t}$ in Theorem \ref{thm: union bound} can be simplified as follows:
    \begin{equation}\label{eq: invariant L}
       r_{\delta,t}=\sqrt{\tfrac{\sigma^2(L^{2t}-1)}{L^2-1}(\varepsilon_1n+\varepsilon_2\log(\tfrac{T}{\delta})}.
    \end{equation}

Compared to the single-time probabilistic bound \eqref{eq: prop1} in Proposition \ref{prop: single-time}, applying the union bound inequality only leads an additional $\mO(\sqrt{\log T})$ term in $r_{\delta,t}$ derived in Theorem \ref{thm: union bound}, which scales logarithmically with $T$. Moreover, the bound \eqref{eq: prop1} is proved to be tight for the stochastic system \eqref{sys: d-t ss}, and is exact for linear systems \cite[Section 4.4]{szy2024Auto}. Therefore, $r_{\delta,t}$ in \eqref{eq: r=} is a sharp probabilistic bound on stochastic trajectory gap. 
A comparison with some existing methods is displayed in Section \ref{sec: case}, showing that our result is much tighter.

\subsection{Probabilistic Bound by Worst-Case Analysis}

The worst-case analysis is a commonly-used method for safety verification when the disturbance is bounded~\cite{prajna2007framework,novara2013direct}.
It can also be applied to stochastic systems to estimate stochastic trajectory gap under any sub-Gaussian stochastic disturbance $w_t$. This is achieved by viewing $w_t$ as a bounded disturbance with high probability. However, the result is more conservative than that in Theorem \ref{thm: union bound}. 

By the norm concentration properties of sub-Gaussian random variables \cite[Chapter 1.4]{rigollet2023high} and the union bound inequality, the bound 
    \begin{equation}\label{eq: b val}
     b_t\geq \sqrt{\sigma_t^2(\varepsilon_1n+\varepsilon_2\log(\tfrac{T}{\delta}))}.
\end{equation}
for all $t\leq T$ 
ensures that $\prob{\|w_t\|\leq b_t,~\forall t\leq T}\geq1-\delta$.
A worst-case probabilistic bound on $\|X_t-x_t\|$ can be established by assuming this bound \eqref{eq: b val}. More specifically, by the local Lipschitz assumption and the triangular inequality,
\begin{equation*}
\begin{split}
    \|X_{t+1}-x_{t+1}\|\leq&\|f(X_t,d_t,t)-f(x_t,d_t,t)\|+\|w_t\| \\
    \leq& L_t\|X_t-x_t\|+b_t 
\end{split}
\end{equation*}
for all $t<T$ with probability at least $1-\delta$.
It follows that
\begin{equation}\label{eq: xt-xt worst}
    \|X_t-x_t\|\leq \sqrt{\psi_{t-1}}\sum_{k=0}^{t-1}b_{k}\sqrt{\psi_k^{-1}},
\end{equation}
where $\psi_t$ is as in \eqref{eq: Psi val}. Plugging \eqref{eq: b val} into \eqref{eq: xt-xt worst}, we conclude
\begin{equation}\label{eq: sd by worst}
\begin{split}
     &\|X_t-x_t\| \\
     \leq&\sqrt{\psi_{t-1}}\sum_{k=0}^{t-1}\sigma_{k}\sqrt{\psi_k^{-1}(\varepsilon_1n+\varepsilon_2\log\tfrac{T}{\delta})}, ~\forall t\leq T
\end{split}
\end{equation}
holds with probability at least $1-\delta$. 

This bound \eqref{eq: sd by worst} derived using worst-case analysis is substantially more conservative than that in Theorem \ref{thm: union bound}. Indeed, since $\sqrt{\Psi_t}\leq \sqrt{\psi_{t-1}}\sum_{k=0}^{t-1}\sigma_{k}\sqrt{\psi_k^{-1}}$ by \eqref{eq: Psi val}, \eqref{eq: sd by worst} is always worse than \eqref{eq: r=}-\eqref{eq: thm1}. To see more clearly the gap, consider the case when $L_t\equiv L$ and $\sigma_t\equiv\sigma$. In this case, \eqref{eq: sd by worst} reduces to $\|X_t-x_t\|\leq \frac{L^t-1}{L-1}\sqrt{\sigma^2(\varepsilon_1n+\varepsilon_2\log\frac{t}{\delta})}$, which is much worse than \eqref{eq: invariant L}, especially when $L\approx 1$ or $\ge 1$. 

\section{Set Erosion with Barrier Functions}\label{sec: barrier}


By combining the set-erosion strategy in Theorem \ref{thm: set-erosion} with the sharp probabilistic bound proposed on the stochastic trajectory gap developed in Theorem \ref{thm: union bound}, the stochastic system \eqref{sys: d-t ss} starting from $\mX_0\subseteq\mC$ is safe with $1-\delta$ guarantee, if 
\begin{equation}\label{eq: set ero}
    \begin{split}
       x_0\in \mathcal{X}_0 \Rightarrow x_t\in \mC\ominus\mathcal{B}^n\left(r_{\delta,t},0\right),~ r_{\delta,t}~ \mbox{is as \eqref{eq: r=}}, ~\forall t\leq T.
    \end{split}
\end{equation}
The formulation~\eqref{eq: set ero} converts the stochastic safety verification problem into a deterministic safety verification on a time-varying set.  Thus, the new formulation~\eqref{eq: set ero} offers tremendous flexibility to Problem~\ref{prob1} as one can leverage any deterministic safety verification methods. 
A large number of existing approaches for safety verification of deterministic systems are based upon barrier functions~\cite{prajna2007framework,agrawal2017discrete}.
%
%
In this section, we focus on two types of commonly-used barrier functions, namely reciprocal barrier function and  exponential barrier function. By combining these notions of barrier function with our set-erosion strategy, we provide two efficient frameworks for safety verification of discrete-time stochastic systems.


\subsubsection{Reciprocal Barrier Function}
 We first extend the reciprocal barrier function proposed in \cite{agrawal2017discrete} to time-varying reciprocal barrier function (TV-RBF) as follows. 
\begin{definition}[TV-RBF]\label{def: DTTVRBF}
    Consider the deterministic system \eqref{sys: d-t ds} with $d_t\in\mD$. Given a terminal time $T$ and a time-varying set $\tmC_t=\{x\in\R^n:~ h(x,t)\geq0\}$ with smooth function $h(x,t)$, a function $B(x,t):\tmC_t\times\mathbb{N}\to\R$ is a discrete-time \emph{time-varying reciprocal barrier function (TV-RBF)} on $\tmC_t$ if there exist extended class $\mK$ functions $\alpha_1(\cdot)$, $\alpha_2(\cdot)$, $\alpha_3(\cdot)$ such that, for all $x\in\tmC_t$ and all $t\leq T$:
\begin{enumerate}
    \item\label{eq: rbf_1} $\frac{1}{\alpha_1(h(x,t))}\leq B(x,t)\leq \frac{1}{\alpha_2(h(x,t))}$, and 
    \smallskip
    \item\label{eq: rbf_2} $B(f(x,d,t),t+1)-B(x,t)\leq \alpha_3(h(x,t))$, $\forall d\in\mD$.
\end{enumerate}    
\end{definition}
\smallskip

When $\mathcal{X}_0\subseteq \tmC_t\subseteq\mC\ominus\mathcal{B}^n\left(r_{\delta,t},0\right)$, the existence of $B(x,t)$ on $\tmC_t$ given as Definition \ref{def: DTTVRBF} guarantees that the deterministic system \eqref{sys: d-t ds} is safe. Therefore, the set-erosion strategy in~\eqref{eq: set ero} implies that the stochastic system \eqref{sys: d-t ss} is safe with $1-\delta$ guarantee. This result is formalized as follows.
\begin{proposition}(Safety using TV-RBF)\label{prop: rbf}
    Consider the stochastic system \eqref{sys: d-t ss} with the initial configuration $\mX_0\subseteq\R^n$ disturbance set $\mD\subseteq\R^m$ and its associated deterministic system \eqref{sys: d-t ds} under Assumptions \ref{ass: Lipschitz f} and \ref{ass: bounded sigma}. Given a safe set $\mC\subseteq\R^n$ and terminal time $T$, define $r_{\delta,t}$ as \eqref{eq: r=}. If $X_0\in\mathcal{X}_0$ and there exist a subset $\tmC_t$ such that $\mX_0\subseteq\tmC_t \subseteq\mC\ominus\mathcal{B}^n\left(r_{\delta,t},0\right)$ and a TV-RBF $B(x,t)$ as defined in Definition \ref{def: DTTVEBF} on $\tmC_t$ for the deterministic system~\eqref{sys: d-t ds}, then the stochastic system \eqref{sys: d-t ss} is safe with $1-\delta$ guarantee on $\mC$. 
\end{proposition}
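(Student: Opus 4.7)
The plan is to derive Proposition 1 as a direct corollary of the set-erosion strategy (Theorem \ref{thm: set-erosion}), using Theorem \ref{thm: union bound} to supply Condition \ref{thm1-1} and the TV-RBF machinery to supply Condition \ref{thm1-2}. Condition \ref{thm1-1} is essentially free: for the choice $r_{\delta,t}$ given in \eqref{eq: r=}, Theorem \ref{thm: union bound} gives $\prob{\|X_t-x_t\|\leq r_{\delta,t},\forall t\leq T}\geq 1-\delta$ for any pair of associated trajectories starting from a common $x_0\in\mX_0$.

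The substantive step is to show that the existence of a TV-RBF $B(x,t)$ on $\tmC_t$ forces every associated deterministic trajectory $x_t$ to remain inside $\tmC_t$ for all $t\leq T$, and hence inside $\mC\ominus\mathcal{B}^n(r_{\delta,t},0)$ by the inclusion $\tmC_t\subseteq\mC\ominus\mathcal{B}^n(r_{\delta,t},0)$. I would prove this by induction on $t$. The base case uses $x_0\in\mX_0\subseteq\tmC_0$, so $h(x_0,0)\geq 0$ and $B(x_0,0)$ is finite by Definition \ref{def: DTTVRBF}-\ref{eq: rbf_1}. For the inductive step, assume $x_t\in\tmC_t$. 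The decrease condition \ref{eq: rbf_2} gives
\begin{equation*}
B(x_{t+1},t+1)\leq B(x_t,t)+\alpha_3(h(x_t,t))<\infty,
\end{equation*}
since both terms on the right are finite. Combining with the lower bound $B(x_{t+1},t+1)\geq 1/\alpha_1(h(x_{t+1},t+1))$ from \ref{eq: rbf_1} forces $\alpha_1(h(x_{t+1},t+1))>0$, and because $\alpha_1$ is an extended class $\mK$ function vanishing at $0$, this yields $h(x_{t+1},t+1)>0$, i.e.\ $x_{t+1}\in\tmC_{t+1}$.

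With both conditions of Theorem \ref{thm: set-erosion} verified, its conclusion gives safety with $1-\delta$ guarantee on $\mC$, completing the proof. The main technical obstacle is the inductive barrier-function argument: one must carefully propagate the finiteness of $B$ through the recursion and use the class $\mK$ property of $\alpha_1$ to convert a finite upper bound on $B$ into a strictly positive value of $h$; the reduction to Theorems \ref{thm: set-erosion} and \ref{thm: union bound} thereafter is purely bookkeeping.
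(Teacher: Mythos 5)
Your overall architecture matches the paper's: use Theorem \ref{thm: union bound} to supply Condition \ref{thm1-1} of Theorem \ref{thm: set-erosion}, and an induction with the TV-RBF to show $x_t\in\tmC_t\subseteq\mC\ominus\mathcal{B}^n(r_{\delta,t},0)$ for all $t\leq T$, supplying Condition \ref{thm1-2}. The problem is the inductive step, which contains a genuine logical gap. You argue that $B(x_{t+1},t+1)\leq B(x_t,t)+\alpha_3(h(x_t,t))<\infty$ together with the lower bound $B(x_{t+1},t+1)\geq 1/\alpha_1(h(x_{t+1},t+1))$ ``forces'' $\alpha_1(h(x_{t+1},t+1))>0$. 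It does not. If $h(x_{t+1},t+1)<0$, then $\alpha_1(h(x_{t+1},t+1))<0$ (an extended class $\mK$ function is negative on negative arguments), so $1/\alpha_1(h(x_{t+1},t+1))$ is a \emph{finite negative} number, and the inequality $B(x_{t+1},t+1)\geq 1/\alpha_1(h(x_{t+1},t+1))$ is satisfied by any finite $B$. The reciprocal barrier blows up only as $h\to 0^{+}$ from inside the set; it provides no obstruction once $h$ has already become negative. In continuous time this is rescued by continuity of trajectories (the state must pass through $h=0^{+}$ where $B\to\infty$), but in discrete time the state can jump across the boundary, so finiteness of $B$ alone proves nothing. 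There is also a secondary circularity: Condition \ref{eq: rbf_1} of Definition \ref{def: DTTVRBF} is only asserted for $x\in\tmC_t$, so invoking the bound $1/\alpha_1(h(x_{t+1},t+1))\leq B(x_{t+1},t+1)$ presupposes $x_{t+1}\in\tmC_{t+1}$, which is exactly what you are trying to establish.

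The paper's own proof does not attempt this finiteness argument. Instead it invokes \cite[Proposition 3]{agrawal2017discrete} as a black box to obtain the \emph{sign} information $\frac{1}{B(x_{t+1},t+1)}\geq 0$, and only then converts it into $h(x_{t+1},t+1)\geq \alpha_1^{-1}\bigl(\frac{1}{B(x_{t+1},t+1)}\bigr)\geq\alpha_1^{-1}(0)=0$ using monotonicity of $\alpha_1^{-1}$. Positivity of $B(x_{t+1},t+1)$ is a strictly stronger fact than finiteness and is what actually drives the conclusion; it comes from the structure of the difference inequality in the cited result, not from the bookkeeping you propose. To repair your proof you would either need to cite that external proposition as the paper does, or impose/derive conditions under which $B$ remains positive (not merely finite) along the trajectory. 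The remainder of your argument --- the base case, the reduction to Theorem \ref{thm: set-erosion}, and the use of Theorem \ref{thm: union bound} --- is correct and matches the paper.
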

\begin{proof}
    To begin with, $x_0\in\mathcal{X}_0\subseteq\tmC_0$. For any $t\in\mathbb{N}_+$, suppose that $x_t\in\tmC_t$, then by \cite[Proposition 3]{agrawal2017discrete}, $\frac{1}{B(x_{t+1},t+1)}\geq0$ holds.
    Since the inverse of an extended class $\mK$ function is again an extended class $\mK$ function,
    \begin{equation*}
        h(x_{t+1},t+1)\geq\alpha_1^{-1}\left(\frac{1}{B(x_{t+1},t+1)}\right)\geq \alpha_1^{-1}(0)=0,
    \end{equation*}
    which implies that $x_{t+1}\in\tmC_{t+1}$. Using induction, one can prove that $x_t\in\tmC_t$ for any $t\leq T$, and thus the deterministic system \eqref{sys: d-t ds} remains on $\tmC_t$. Since $\tmC_t\subseteq\mC\ominus\mathcal{B}^n\left(r_{\delta,t},0\right)$, the deterministic trajectory $x_t$ satisfies the set-erosion strategy in \eqref{eq: set ero}, which suffices to verify the safety of the stochastic system \eqref{sys: d-t ss} with $1-\delta$ guarantee on $\mathcal{C}$.
\end{proof}
\bigskip
\subsubsection{Exponential Barrier Function}
One potential issue of the reciprocal barrier function is that it tends to  infinity as its argument approaches the boundary of the safe set \cite{ames2016control}. To address this issue, the notion of exponential barrier function has been proposed in the literature~\cite{agrawal2017discrete}. Given a time-varying set $\tmC_t\subseteq \mathbb{R}^n$, we generalize this notion and introduce the discrete-time time-varying exponential barrier function (TV-EBF) for the set $\tmC_t$. 
\begin{definition}[TV-EBF]\label{def: DTTVEBF}
    Consider the deterministic system \eqref{sys: d-t ds} with $d_t\in\mD$, $\mD\subseteq\R^m$. Given a terminal time $T$, if there exists a smooth function $h(x,t):\R^n\times\mathbb{N}\to\R$ such that for any $t\leq T$:
         \begin{enumerate}
         \item $\tmC_t=\{x\in\R^n:~ h(x,t)\geq0\}$, and 
         \item there exists $\gamma\in(0,1]$ such that, for all $d\in\mD$,
         \begin{equation*}
             h(f(x,d,t),t+1)\geq (1-\gamma) h(x,t),~ \mbox{ for all } x\in\tmC_t.
         \end{equation*}
     \end{enumerate}
     Then $h(x,t)$ is a \emph{discrete-time time-varying exponential barrier function (TV-EBF)} for the set $\tmC_t$.
\end{definition}

Similar to TV-RBF, by choosing $\mathcal{X}_0\subseteq \tmC_t\subseteq \mC\ominus\mathcal{B}^n\left(r_{\delta,t},0\right)$, the existence of a TV-EBF for the set $\tmC_t$ for the deterministic system~\eqref{sys: d-t ds} can certify the safety of the stochastic system \eqref{sys: d-t ss} with $1-\delta$ guarantee.
\begin{proposition}(Safety using TV-EBF)\label{prop: ebf}
    Consider the stochastic system \eqref{sys: d-t ss} with the initial configuration $\mathcal{X}_0\subseteq \R^n$ and the disturbance set $\mD\subseteq\R^m$ and its associated deterministic system \eqref{sys: d-t ds} under Assumptions \ref{ass: Lipschitz f} and \ref{ass: bounded sigma}. Given a safe set $\mC\subseteq\R^n$ 
     and terminal time $T$, define $r_{\delta,t}$ as \eqref{eq: r=}. If $X_0\in\mX_0$ and there exist a $\tmC_t$ such that $\mathcal{X}_0\subseteq \tmC_t\subseteq \mC\ominus\mathcal{B}^n\left(r_{\delta,t},0\right)$ and a TV-EBF $h(x,t)$ as defined in Definition \ref{def: DTTVEBF} on $\tmC_t$ for the deterministic system~\eqref{sys: d-t ds}, then the stochastic system \eqref{sys: d-t ss} is safe with $1-\delta$ guarantee on $\mC$. 
\end{proposition}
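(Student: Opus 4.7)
The plan is to mirror the structure of the proof of Proposition~\ref{prop: rbf}, replacing the reciprocal barrier argument with a direct inductive argument on the scalar function $h(x,t)$, and then invoke Theorem~\ref{thm: set-erosion} together with Theorem~\ref{thm: union bound} to pass from the deterministic conclusion to the stochastic safety guarantee.

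First I would establish, by induction on $t$, that the deterministic trajectory satisfies $x_t\in\tmC_t$ for every $t\leq T$. The base case is immediate, since by assumption $x_0\in\mX_0\subseteq\tmC_0$, which means $h(x_0,0)\geq 0$. For the inductive step, suppose $x_t\in\tmC_t$, so that $h(x_t,t)\geq 0$. Since $d_t\in\mD$ and $x_{t+1}=f(x_t,d_t,t)$, the TV-EBF property of Definition~\ref{def: DTTVEBF} gives
\begin{equation*}
  h(x_{t+1},t+1)=h(f(x_t,d_t,t),t+1)\geq (1-\gamma)\,h(x_t,t)\geq 0,
\end{equation*}
because $\gamma\in(0,1]$ and $h(x_t,t)\geq 0$. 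Hence $x_{t+1}\in\tmC_{t+1}$, completing the induction.

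Next, using the hypothesis $\tmC_t\subseteq\mC\ominus\mathcal{B}^n(r_{\delta,t},0)$, I conclude that $x_t\in\mC\ominus\mathcal{B}^n(r_{\delta,t},0)$ for every $t\leq T$, which is exactly Condition~\ref{thm1-2} of Theorem~\ref{thm: set-erosion}. Condition~\ref{thm1-1} is supplied by Theorem~\ref{thm: union bound}, since $r_{\delta,t}$ is chosen according to~\eqref{eq: r=}. Applying Theorem~\ref{thm: set-erosion} then yields that the stochastic system~\eqref{sys: d-t ss} is safe with $1-\delta$ guarantee on $\mC$ during $t\leq T$, as required.

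The argument is essentially routine once one recognizes the pattern; the only subtlety, and the one worth being careful about, is the inductive step, where we need the sign-preserving property of the contraction $h(x_{t+1},t+1)\geq(1-\gamma)h(x_t,t)$ to hold for every admissible disturbance $d_t\in\mD$. This is built into the TV-EBF definition with the ``$\forall d\in\mD$'' quantifier, so no further uniformity or worst-case analysis over $d_t$ is needed beyond quoting the definition. The remainder is bookkeeping that combines the set-erosion strategy with the sharp probabilistic gap bound.
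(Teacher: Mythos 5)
Your proof is correct and follows essentially the same route as the paper: establish that the deterministic trajectory remains in $\tmC_t$ via the TV-EBF contraction, then invoke the set-erosion strategy with the gap bound from Theorem~\ref{thm: union bound}. Your explicit induction on the sign of $h(x_t,t)$ is in fact a slightly more careful rendering of the paper's one-line claim $h(x_t,t)\geq(1-\gamma)^t h(x_0,0)$, since the EBF inequality is only assumed on $\tmC_t$ and thus needs $x_t\in\tmC_t$ at each step to be chained.
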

\begin{proof}
   Clearly it holds that $h(x_t,t)\geq (1-\gamma)^th(x_0,0)$ for any $t\leq T$.
   %
   Since $x_0\in\mathcal{X}_0\subseteq\tilde\mC_0$, this implies that $h(x_t,t) \ge 0$, for every $t\le T$, 
   Therefore $x_t\in\tilde\mC_t$, for every $t\le T$. Since $\tmC_t\subseteq\mC\ominus\mathcal{B}^n\left(r_{\delta,t},0\right)$, $x_t$ satisfies the set-erosion strategy in \eqref{eq: set ero}, which suffices to show that the stochastic system \eqref{sys: d-t ss} is safe with $1-\delta$ guarantee on $\mathcal{C}$. 
\end{proof}

Propositions \ref{prop: rbf} and \ref{prop: ebf} provide two stochastic safety verification schemes based on deterministic barrier certifications, whose efficiency depends on the calculation of the Minkowski difference $\mC\ominus\mathcal{B}^n\left(r_{\delta,t},0\right)$. Notice that $\mC\ominus\mathcal{B}^n\left(r_{\delta,t},0\right)=\mC^c\oplus\mathcal{B}^n\left(r_{\delta,t},0\right)$, where the complement $\mC^c$ of $\mC$ is treated as the unsafe set. When $\mC^c$ is the union of convex sets such as the union of ellipsoids or polyhedral, $\mC^c\oplus\mathcal{B}^n\left(r_{\delta,t},0\right)$ can be efficiently computed \cite{weibel2007minkowski}.

\section{Case Studies}\label{sec: case}
In this section, we present two examples to validate the proposed safety verification method. In the first example, we verify safety of a linear scalar stochastic system on a finite interval. In the second example, we verify safety of unicycle model of a vehicle with a stabilizing feedback controller. 

\subsection{Safety of Linear Systems over an Interval}
As the first experiment, we consider the following linear stochastic system
\begin{equation}\label{sys: lin}
    X_{t+1}=0.99X_t+w_t,~ X_0=0,
\end{equation}
where $X_t\in\R$ and $w_t\sim \mathcal{N}(0,10^{-3})$. This linear system satisfies Assumption \ref{ass: Lipschitz f} with $L_t\equiv L=0.99$ and Assumption \ref{ass: bounded sigma} with $\sigma_t^2\equiv\sigma^2=10^{-3}$. The probabilistic bound $r_{\delta,t}$ on stochastic trajectory gap can be calculated by \eqref{eq: invariant L} with $\varepsilon_1=2\log2$ and $\varepsilon_2=2$ by Remark \ref{remark: parameter}.

The task is to verify the safety of the linear system \eqref{sys: lin} with $1-\delta$ guarantee on the interval $\mC=\{x\in\R: |x|\leq R\}$ during $t\leq T$. Notice that by fixing $X_0=0$, the associated deterministic trajectory of the system \eqref{sys: lin} is $x_t\equiv0$, and $\ballone{R,0}\ominus\ballone{r_{\delta,t},0}=\ballone{R-r_{\delta,t},0}$. Therefore, by our set-erosion strategy in~\eqref{eq: set ero}, it is enough to verify whether $R\geq r_{\delta,t}$ for any $t\leq T$. Since $r_{\delta,t}$ calculated by \eqref{eq: invariant L} is increasing with $t$, we conclude that it suffices to verify if $R\geq r_{\delta,T}$, which is equivalent to:
\begin{equation}\label{eq: delta>=}
    \delta\geq T\cdot\exp\left\{-\left(\frac{R^2(L^2-1)}{2\sigma^2(L^{2T}-1)}-\log2\right)\right\}.
\end{equation}


The right-hand side of \eqref{eq: delta>=} is the lowest probability $\underline{\delta}$ that our strategy \eqref{eq: set ero} cannot guarantee safety for the linear system \eqref{sys: lin}. When $\underline{\delta}\geq1$, it means that the radius $R$ of $\mC$ is so small that the system \eqref{sys: lin} is considered unsafe on $\mC$. In such a scenario we set $\delta=1$. Figure \ref{fig: Lin delta-R} shows the relationship between $\underline{\delta}$ and $R$ determined by \eqref{eq: delta>=}. Our result is compared with the curve given by the main result of \cite{cosner2024bounding}, which is the best existing result to the best of our knowledge, and the simulated result given by Monte-Carlo approximations with $3\times10^6$ sampled trajectories. When $R$ is very small, our strategy directly implies that the system is unsafe on $\mC$, as suggested by the simulated result. When $R$ gets larger, our strategy offers a result close to the simulated result and significantly sharper than existing methods.

\begin{figure}[t]
	\centering
  \includegraphics[width =0.48\linewidth]{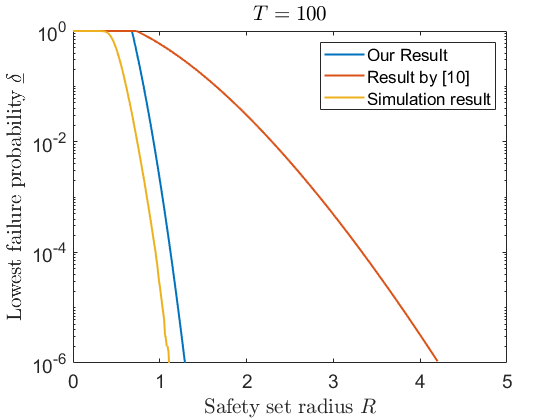}
  \includegraphics[width =0.48\linewidth]{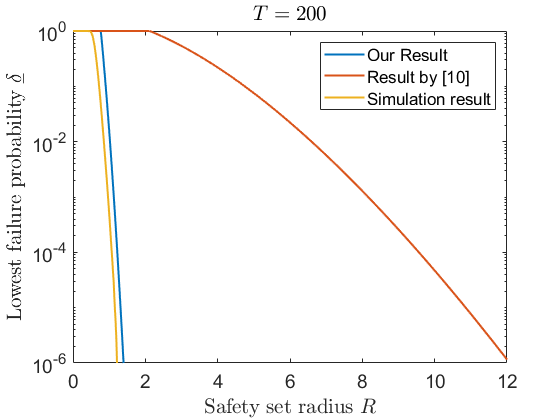}
	\caption{The probability $\underline{\delta}$ (the lower the safer) that a strategy cannot guarantee safety on the centered ball with radius $R$ during $t\leq T$. The blue curve is given by our strategy \eqref{eq: delta>=}, the yellow curve is the simulated result from $3\times10^6$ sampled trajectories, and the red curve is the baseline we choose, given by the main result of \cite[Corollary 1]{cosner2024bounding}. \textbf{Left:} $T=100$. \textbf{Right:} $T=200$. 
 }
	\label{fig: Lin delta-R}
\end{figure} 


\subsection{Nonlinear Unicycle System}
Next, we consider a unicycle moving on a $2$-dimensional plane with obstacles. The unsafe region $\mC_u=\{(p_x-1.5)^2+(p_y-3.5)^2\geq0.9^2\}\cap \{(p_x+0.5)^2+(p_y-2)^2\geq0.72^2\}\cap \{(p_x-6.2)^2+(p_y-0.7)^2\geq0.75^2\}$ is the union of red obstacles shown in Figure~\ref{fig: unicycle} and the safe region is $\mC=\R^n/\mC_u$. The discrete-time system model is given as
\begin{equation}\label{sys: uni}
    \begin{split}
        X_{t+1} &= X_t + \eta\begin{bmatrix}
    v_t(X_t) \cos(\theta_t)\\
    v_t(X_t) \sin(\theta_t)\\
    \omega_t(X_t) + d_t
\end{bmatrix} + w_t \\
&\defeq f(X_t,d_t)+w_t,
    \end{split}
\end{equation}
where $X_t = \begin{bmatrix}p_{x,t} & p_{y,t} & \theta_t\end{bmatrix}^{\top}$ is the state of the vehicle, $(p_{x,t},p_{y,t})$ is the position of the center of mass of the vehicle in the plane, $\theta_t$ is the heading angle of the vehicle, $v_t(X_t)$ is the velocity of the center of mass, $\omega_t(X_t)$ is the angular velocity of the vehicle, $\eta$ is the discretization step size, $d_t$ is a bounded disturbance on the angular velocity, and $w_t$ is the stochastic disturbance on the model. 
In this experiment, we suppose that $|d_t|\leq0.1$, $w_t~\sim \sqrt{\eta}\cdot\mathcal{N}(0,0.01)$, $\eta=0.01$. $v_t(X_t)$ and $\omega_t(X_t)$ are designed as the feedback controllers proposed in~\cite{MA-GC-AB-AB:95}. The task of the unicycle is to reach the origin point while avoiding the obstacles under both $d_t$ and $w_t$. The details of controller design can be found in \cite[Section VIII]{szy2024TAC}.

Our goal is to verify the safety of the stochastic system \eqref{sys: uni} with $1-\delta$ guarantee on through set erosion strategy. We set  $\delta=10^{-4}$ and the initial state set $\mX_0=\begin{bmatrix}5 & 5 & -\frac{\pi}{3} \end{bmatrix}^{\top}\pm0.1$. The probabilistic bound $r_{\delta,t}$ is calculated as \eqref{eq: r=} with $\varepsilon=1/16$, and $L_t$ estimated by the methods proposed in \cite{chuchu2017simulation}. Define $r_m=\max_{t\leq T}r_{\delta,t}$, then by the set erosion strategy \eqref{eq: set ero}, it suffices to verify whether for the associated deterministic system of \eqref{sys: uni}, $x_t\in\mC\ominus \mathcal{B}^n\left(r_m,0\right)$ holds for any $x_t$ starting from $x_0\in\mX_0$ and under any $|d_t|\leq0.1$. We use barrier certification to verify this condition, where Proposition \ref{prop: rbf} and Proposition \ref{prop: ebf} reduce to typical barrier certification for forward-invariant condition \cite{ames2019control}. The tool we use is FOSSIL developed by \cite{abate2021fossil}. Based on the experiment setting, this program returns \texttt{``Found a valid BARRIER certificate''}, implying that the safety of the system \eqref{sys: uni} with $1-\delta$ guarantee on the zero-superlevel set of this barrier function is verified.

To visualize the set-erosion strategy, we simulate 5000 independent trajectories of the associated deterministic system from $x_0\in\mX_0$ with $T=100,500$ separately. The results are shown in the left column of Figure \ref{fig: unicycle}. The areas in yellow are the eroded parts $\mC\backslash(\mC\ominus\mathcal{B}^n\left(r_m,0\right))$ of the $\mC$. It is clear that all the deterministic trajectories have no intersections with the yellow areas. Meanwhile, to validate the effectiveness of our strategy, we sample 20000 independent trajectories of the stochastic system \eqref{sys: uni} from $X_0\in\mX_0$ during $t\leq T$. It is clear that all the stochastic trajectories successfully avoid all the obstacles, satisfying our safety verification strategy.

\begin{figure}[t]
	\centering
  \includegraphics[width =0.48\linewidth]{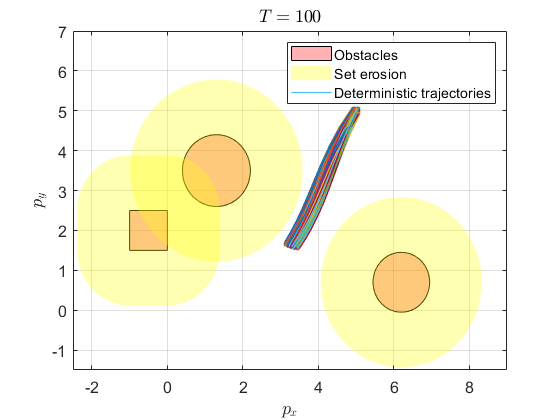}
    \includegraphics[width =0.48\linewidth]{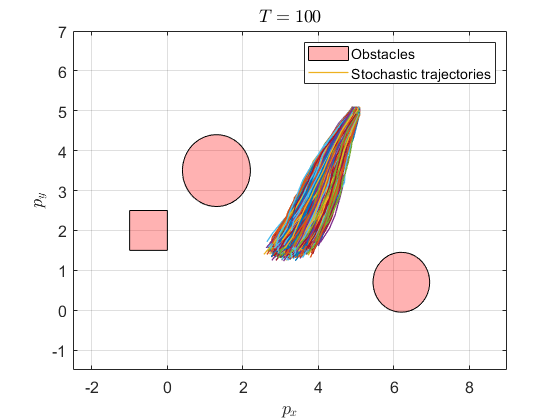}
    \includegraphics[width =0.48\linewidth]{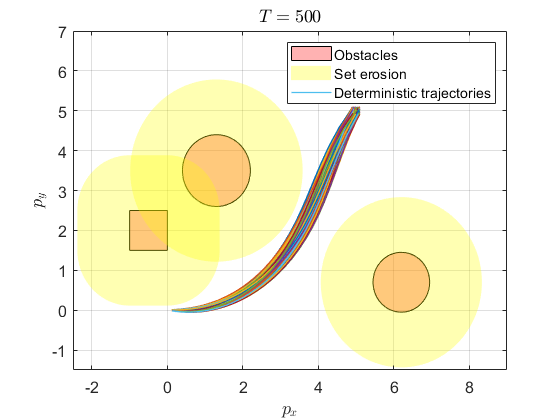}
    \includegraphics[width =0.48\linewidth]{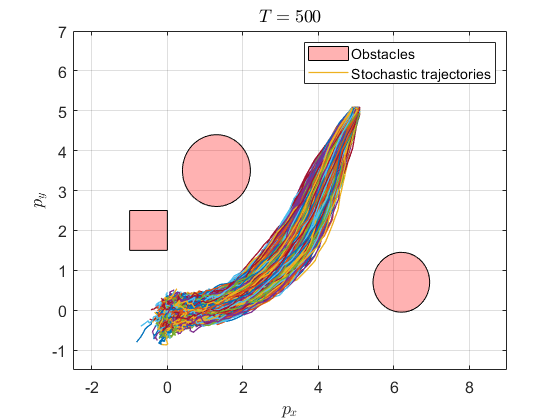}
	\caption{Stochastic safety verification of the unicycle system \eqref{sys: uni} with $1-10^{-4}$ failure guarantee. \textbf{Left:} Stochastic safety verification using the set-erosion strategy at the terminal time $T=100,500$. The red shapes are obstacles. The yellow areas are the eroded part $\mC\backslash(\mC\ominus\mathcal{B}^n\left(r_m,0\right))$. Each curve is an independent trajectory of the associated deterministic unicycle system with different $d_t$. \textbf{Right:} Simulation of the stochastic trajectories. Each curve is an independently sampled trajectory of the stochastic system \eqref{sys: uni} during $t\leq T$, $T=100,500$.}
	\label{fig: unicycle}
\end{figure} 

\section{Conclusion} \label{sec: conclusion}
We propose a general approach called set-erosion strategy for safety verification of discrete-time stochastic systems with sub-Gaussian disturbances. 
 Our set-erosion strategy reduces the problem of safety verification of discrete-time stochastic systems into the safety verification of an associated deterministic system on an eroded subset of the safe set. 
Based on our results in \cite{szy2024Auto}, we provide a sharp probabilistic bound on the depth of this erosion. 
This approach brings huge flexibility to safety verification of stochastic systems as any deterministic safety verification methods can be used to ensure safety on the eroded subset of the safe set.
In particular, we consider two types of barrier functions for safety verification of deterministic systems and leveraged them to obtain efficient stochastic safety verification schemes. 


\bibliographystyle{ieeetr}
\bibliography{main}    
\end{document}